\definecolor{awesome}{rgb}{0.93, 0.53, 0.18}
\definecolor{prettygreen}{RGB}{5,125,143}
\newcommand{\ot}{\otimes}
\newtheorem{theorem}{Theorem}%[chapter]
\newtheorem*{theorem*}{Theorem}
\newtheorem{definition}{Definition}%[chapter]
\newtheorem{observation}{Observation}%[chapter]
\newtheorem{lemma}{Lemma}%[chapter]
\begin{document}

\title{The power of quantum catalytic local operations}

\author{Patryk Lipka-Bartosik}
\address{Department of Applied Physics, University of Geneva, Geneva, Switzerland}
\affiliation{Center for Theoretical Physics, Polish Academy of Sciences, Warsaw, Poland}
\address{Institute of Theoretical Physics, Jagiellonian University, 30-348 Kraków, Poland}

\author{Jessica Bavaresco}
\address{Department of Applied Physics, University of Geneva, Geneva, Switzerland}
\address{Sorbonne Universit\'e, CNRS, LIP6, F-75005 Paris, France}

\author{Nicolas Brunner}
\address{Department of Applied Physics, University of Geneva, Geneva, Switzerland}

\author{Pavel Sekatski}
\address{Department of Applied Physics, University of Geneva, Geneva, Switzerland}

\begin{abstract}
A key result in entanglement theory is that the addition of a catalyst dramatically enlarges the set of possible state transformations via local operations and classical communication (LOCC). However, it remains unclear what is the interplay between classical communication and quantum catalysis. Here our aim is to disentangle the effect of the catalyst from that of classical communication. To do so, we explore a class of state transformations termed catalytic local operations (CLO) and compare it to LOCC and to stochastic LOCC augmented by bounded quantum communication. We show that these classes are incomparable and capture different facets of quantum state transformations. 
\end{abstract}

\maketitle
%\tableofcontents

% %%%%%%%%%%%%%%%%%%%%%%%%%%%%%%%%%%%%%%%%%%%%%%%%%%%%%%
\section{Introduction}
% %%%%%%%%%%%%%%%%%%%%%%%%%%%%%%%%%%%%%%%%%%%%%%%%%%%%%%

Correlations between separated systems represent a fundamental concept in many scientific disciplines and a resource in various tasks. To understand correlations from an information-theoretic perspective one studies how they can be established and transformed under various operational restrictions, termed transformation classes. Arguably, local operations (LO) represent the minimal such class of operations: applied separately on different systems, they preserve their independence and are thus expected not to increase measures of correlations based on information as, e.g., captured by various data processing inequalities. 

In quantum information theory these ideas have been structured in the resource-theoretic approach~\cite{Chitambar_2019}, explicitly specifying which operations (e.g. LO) and which state (e.g. product states) are free. Correlation measures must then be monotones under those operations.
In this context, a central question is to characterize genuinely quantum forms of correlations, and separate them from their classical counterpart (see e.g.~\cite{RevModPhys.81.865,Brunner_2014,ModiReview,uolaReview}). The most prominent example is entanglement theory, where LO are considered together with additional classical communication between the parties, resulting in the LOCC transformation class~\cite{Chitambar_2014}. Here, the addition of unbounded classical communication in particular guarantees that any amount of classical correlations comes for free.

%%%%%%%%%%%%%%%%%%%%%%%%%%%%%%%%%%%%%%%%%
\begin{figure}[b!]
    \centering
    \includegraphics[width=0.75\linewidth]{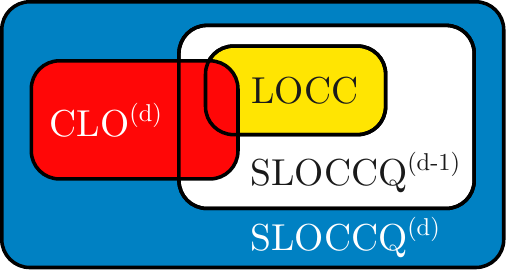}
    \caption{We consider three operational frameworks: catalytic local operations with bounded Schmidt number of the catalyst, $\text{CLO}^{(d)}$; local operations and classical communication, LOCC; and stochastic local operations with bounded quantum communication, $\text{SLOCCQ}^{(d)}$. By construction, every $\text{CLO}^{(d)}$ protocol can be simulated within $\text{SLOCCQ}^{(d)}$. However, our results demonstrate strict separations: there are transformations in $\text{CLO}^{(d)}$ that cannot be realized by $\text{SLOCCQ}^{(d-1)}$, and conversely, some tasks possible with LOCC cannot be reproduced by catalytic protocols, regardless of the catalyst’s dimension. Thus, these three classes capture genuinely different aspects of bipartite quantum state manipulation.}
    \label{fig:clod}
\end{figure}
%%%%%%%%%%%%%%%%%%%%%%%%%%%%%%%%%%%%%%%%%

Remarkably, it was discovered that the set of state transformations possible with LOCC is drastically enlarged by allowing an extra system to act as a catalyst \cite{Jonathan_1999,bennett2000exact,eisert2000catalysis,duan2005multiple,kondra2021catalytic,rubboli2022fundamental,review1, review2}. Quantum catalysis, which has also been studied in different contexts \cite{bavaresco2025catalytic,Wilming2017,Henao2023,Shiraishi2021,Brandao2015second}, shows that having access to an additional entangled state (a \textit{catalyst}), which must be returned in its initial state, can enable transformations that would otherwise be impossible. Conceptually, this reveals a novel way of exploiting entanglement: rather than acting as a consumable, it can also serve as a reusable resource. This also makes quantum catalysis a natural extension of the LOCC framework. In particular, this effect has operational consequences; for example, catalysis can improve teleportation~\cite{PhysRevLett.127.080502}, state merging \cite{kondra2021catalytic}, and entanglement distillation~\cite{kondra2021catalytic}.

This raises a number of fundamental questions: what is the real power of the catalyst in enabling state transformations? In particular, what state transformations are possible in a setting where only LO and catalysis are considered, without involving classical communication? 

In order to disentangle the effect of the catalyst from that of classical communication, we explore a class of state transformations termed catalytic local operations (CLO). Here, distant parties share a system and a catalyst to which they apply local operations (represented by completely positive trace-preserving (CPTP) maps) and must return the catalyst exactly unchanged. Yet, the system and catalyst can become correlated at the end of the process. Importantly, classical communication is not allowed in this setting. Intuitively, CLO allows the parties to exploit pre-established quantum correlations encoded in the catalyst, but since the catalyst is returned in the same state, it cannot be consumed as a resource. We investigate which state transformations are possible in CLO, and in particular highlight its fundamental difference with the LOCC class. We also investigate how CLO relates to LOCC transformations assisted by bounded quantum communication.

More formally, we define a class of operations CLO$^{(d)}$, where local operations are augmented with the use of a catalyst featuring quantum correlations of bounded dimension, i.e., assuming its Schmidt number is upper bounded by a parameter $d$. First, we note that there are transformations in $\text{LOCC}$ that cannot be implemented using $\text{CLO}^{(d)}$. Indeed, classical communication cannot be replaced by a catalyst in general. Then we compare $\text{CLO}^{(d)}$ with the class $\text{SLOCCQ}^{(d)}$ of protocols that allow stochastic local operations, classical communication, and bounded quantum communication, such that the product of the dimensions of the transmitted quantum systems is at most $d$. Intuitively, sending a quantum system of dimension $d$ is at least as powerful as preparing a catalyst with Schmidt number $d$, so $\text{CLO}^{(d)}$ is a subset of $\text{SLOCCQ}^{(d)}$ by construction. Surprisingly, however, catalytic local operations can sometimes outperform SLOCCQ with a respectively smaller budget for quantum communication, showing strict separations between the classes, that is, $\text{CLO}^{(d)} \not\subset \text{SLOCCQ}^{(d-1)}$. These results show that quantum catalysis, classical communication, and bounded quantum communication are incomparable resources and capture genuinely distinct aspects of bipartite quantum state manipulation.

%%%%%%%%%%%%%%%%%%%%%%%%%%%%%%%%%%%%%%%%%%%%%%%%%%%%%%
\section{Definitions and preliminary results}
%%%%%%%%%%%%%%%%%%%%%%%%%%%%%%%%%%%%%%%%%%%%%%%%%%%%%%

We investigate different transformation classes from bipartite quantum states $\rho_{AB} \in \mathcal{L(\mathcal{H}_A\otimes \mathcal{H}_B)}$ to bipartite quantum states of potentially different systems $\sigma_{A'B'}\in \mathcal{L}(\mathcal{H}_{A'}\otimes \mathcal{H}_{B'})$. 
%\sout{In principle we also allow for the more general scenario where the parties can be correlated with ancillary systems, that is, transform states from $\mathcal{L(\mathcal{H}_A\otimes \mathcal{H}_B \otimes \mathcal{H}_R)}$ to $ \mathcal{L(\mathcal{H}_{A'}\otimes \mathcal{H}_{B'} \otimes \mathcal{H}_{R'})}$, where $R$ and $R'$ are ancillary systems to which Alice and Bob do not have access. To keep the notation simple, we omit $R$ and $R'$ where it is trivial or irrelevant for the discussion.}
A given transformation class $\tt T$, associates with each input state $\rho_{AB}$ and given output systems (Hilbert spaces $\mathcal{H}_{A'}$ and $\mathcal{H}_{B'})$ a set of density operators
to which that state can be transformed, namely
\begin{equation} \label{eq: state transform}
    {\tt T}_{A'B'} : \rho_{AB} \mapsto  {\tt T}_{A'B'}(\rho_{AB})\subset \mathcal{L}(\mathcal{H}_{A'}\otimes \mathcal{H}_{B'}).
\end{equation}
When the output systems are not specified, we write ${\tt T}(\rho_{AB})$ to denote the union of reachable sets of output states for all dimensions of $A'$ and $B'$. 

Since the world is not limited to the two systems $A(A')$ and $B(B')$, we naturally include in the study of transformation classes input state $\rho_{ABR}\in \mathcal{L}(\mathcal{H}_A\otimes \mathcal{H}_B \otimes \mathcal{H}_R)$ that may share quantum correlations with an auxiliary system $R$, to which Alice and Bob do not have access. In this case the reachable set is ${\tt T}_{A'B'}(\rho_{ABR})\subset \mathcal{L}(\mathcal{H}_{A'}\otimes \mathcal{H}_{B'} \otimes \mathcal{H}_{R})$. To keep the notation simple, we omit $R$ where it is trivial or irrelevant for the discussion. 

In quantum information it is common that such transformations can be described irrespectively of $\rho_{AB}$, i.e. ${\tt T}_{A'B'}(\rho_{AB})$ contains the images of a fixed set of CP(TP) maps on $\rho_{AB}$. For instance, this is the case of local operations and shared randomness (LOSR), local operations and shared entanglement (LOSE), local operations and classical communication (LOCC), and stochastic local operations, bounded-quantum and classical communication (SLOCCQ in Def.~\ref{def: SLOCCQ}).  In contrast, this will not be the case for catalytic local transformations (CLO in Def.~\ref{def: clo}), central to this paper, hence the reader shall keep the more general notion of Eq.~\eqref{eq: state transform} in mind. Let us now formally define the later two classes of transformations. 

\begin{definition}[{\rm SLOCCQ}$^{(d)}$] \label{def: SLOCCQ}
    The class of transformations {\rm SLOCCQ}$^{(d)}$ contains all sequences of stochastic local operations, classical communication, and bounded quantum communication. The quantum communication is bounded in the following sense. Let $d_i$ be the dimension of the quantum system sent in the round $i$ from one party to the other ($d_i=1$ if the round only involves classical communication or no communication). We require that
    \begin{equation}
        \prod_i d_i \leq d.
    \end{equation}
\end{definition}

To introduce the class of CLO along the same lines, we want to be able to bound the dimension of the catalyst system in a meaningful way. To do so we recall the definition of the Schmidt number.

\begin{definition} [{\rm Schmidt number~\cite{terhal2000,sanpera2001}}] The Schmidt number of a bipartite state $\rho_{AB}$, is given by 
%\begin{align}
%    {\rm SN}(\rho_{AB}) := \inf_{\{\Psi^{(i)}_{AB}\}} \!\! \max_i &\quad{\rm Rank}(\tr_{A} \Psi^{(i)}_{AB} ) \\
%    {\rm such \,that} &\quad \sum_{i} p_i \Psi_{AB}^{(i)} =\rho_{AB}\\
%    & \quad (\Psi_{AB}^{(i)})^2 = \Psi_{AB}^{(i)}\\
%    &\quad p_i\geq 0 \\
%    & \quad \Psi_{AB}^{(i)}= \Psi_{AB}^{(i)\dag},
%\end{align}
\begin{align}
    {\rm SN}(\rho_{AB}) \coloneqq \inf_{\{\Psi^{(i)}_{AB}\}} \!\! \max_i &\quad{\rm Rank}(\tr_{A} \Psi^{(i)}_{AB}),
\end{align}
such that $\rho_{AB}=\sum_{i} p_i \Psi_{AB}^{(i)}$, with $p_i\geq0$, is a pure state decomposition of $\rho_{AB}$. 
\end{definition}

The Schmidt number is a discrete measure of bipartite entanglement
%~\cite{10.5555/2011706.2011707}, 
quantifying its dimensionality. Let us consider the following definition.

\begin{definition}[{\rm CLO}$^{(d)}$] \label{def: clo} The class of bounded catalytic local operations contains all transformations of a bipartite state $\rho_{AB}$ into $\sigma_{A'B'}$ that can be implemented with local operations (CPTP maps) $\mathcal{A}: AC_A \to A'C_A$, $\mathcal{B}:BC_B \to B'C_B$ and a catalyst with a bounded Schmidt number ${\rm SN}(\omega_{C_A C_B})\leq d$, such that 
\begin{align}
\begin{split}
    \tau_{A'B' C_A C_B} &= \mathcal{A}\otimes \mathcal{B} [\rho_{AB}\otimes \omega_{C_A C_B}] 
    \quad \text{with} \\
    \sigma_{A' B'} &= \tr_{C_A C_B}  (\tau_{A'B' C_A C_B}) \quad \text{and} \\
    \omega_{C_A C_B} &= \tr_{A' B'} (\tau_{A'B' C_A C_B}).
\end{split}    
\end{align}
\end{definition}

We will simply write CLO, if $d$ is not specified. Loosely speaking, the constraint ${\rm SN}(\omega_{C_AC_B})\leq d$ guarantees that the catalyst can be encoded by a system with local quantum dimension $d$ (plus any classical degrees of freedom). It is worth mentioning that in the case of state transformations via catalytic LOCC, restrictions on the dimension and other properties of the catalyst have also been consider, see e.g.~\cite{eisert2000catalysis,grabowecky2019bounds,rubboli2022fundamental}.  We emphasise that a combination $\{\mathcal{A},\mathcal{B},\omega_{C_AC_B}\}$ defining a CLO transformation for the input state $\rho_{AB}$, is in general not CLO for another input state $\rho_{AB}'$, as the marginal state of the catalyst may not be preserved. This means, in particular, that catalytic transformations cannot be performed in a black-box fashion --- knowledge of the initial state is required in general. Finally, we emphasize that we require the state of the catalyst to be exactly preserved at the end of the transformation, in contrast to quantum state embezzlement~\cite{vanDam2003}.

%We will now establish some relations between the two classes of transformations introduced above, and some other standard classes. The result is summarized in Fig.~\ref{fig:clod}. 

%%%%%%%%%%%%%%%%%%%%%%%%%%%%%%%%%%%%%%%%%%%%%%%%%%%%%%
\section{Results}
%%%%%%%%%%%%%%%%%%%%%%%%%%%%%%%%%%%%%%%%%%%%%%%%%%%%%%
We now state our main results, which describe the nontrivial relationships among the different classes of bipartite transformations studied in this paper. See Fig.~\ref{fig:clod} for a schematic summary.

Let us start from three observations that directly follow from the way we construct the set of transformations ${\rm CLO}^{(d)}$.

\begin{observation}
  {\rm CLO}$^{(d)}\subseteq$ {\rm LOSE}$^{(d)}\subseteq$ {\rm SLOCCQ}$^{(d)}$. 
\end{observation}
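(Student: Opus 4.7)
The plan is to verify each inclusion directly from the definitions; both should follow with very little work, essentially by rewriting one protocol as another.

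\emph{First inclusion,} $\mathrm{CLO}^{(d)}\subseteq\mathrm{LOSE}^{(d)}$. Any $\mathrm{CLO}^{(d)}$ transformation is specified by local CPTP maps $\mathcal{A}$, $\mathcal{B}$ together with a bipartite catalyst $\omega_{C_AC_B}$ of Schmidt number at most $d$, subject to the additional constraint that the reduced state of the catalyst is preserved at the end of the protocol. If one simply drops that preservation constraint, the same triple $(\mathcal{A},\mathcal{B},\omega_{C_AC_B})$ is by definition an $\mathrm{LOSE}^{(d)}$ protocol: local operations assisted by a shared entangled state of Schmidt number at most $d$. The resulting output marginal $\sigma_{A'B'}$ is unchanged, so the inclusion follows.

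\emph{Second inclusion,} $\mathrm{LOSE}^{(d)}\subseteq\mathrm{SLOCCQ}^{(d)}$. Here the plan is to show that any shared state with Schmidt number at most $d$ can be distributed using a single round of quantum communication of dimension $d$ (plus free classical resources). Fix a pure-state decomposition $\omega_{C_AC_B}=\sum_i p_i\,|\Psi^{(i)}\rangle\!\langle\Psi^{(i)}|$ attaining the Schmidt number, so each $|\Psi^{(i)}\rangle$ has Schmidt rank at most $d$. Alice samples $i$ with probability $p_i$ (free classical randomness), prepares $|\Psi^{(i)}\rangle$ entirely in her lab, and uses a local isometry to compress the $C_B$-half onto a register $\tilde{C}_B\cong\mathbb{C}^d$, which is allowed because the reduced state on $C_B$ has rank at most $d$ for this particular $i$. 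She then sends $\tilde{C}_B$ to Bob together with the classical label $i$; Bob applies the matching isometry to embed $\tilde{C}_B$ back into $C_B$, so that the two parties now share $|\Psi^{(i)}\rangle_{C_AC_B}$ with probability $p_i$, i.e., the joint state $\omega_{C_AC_B}$. After this, they apply $\mathcal{A}$ and $\mathcal{B}$ as in the original $\mathrm{LOSE}^{(d)}$ protocol. Only one round of quantum communication is used, of dimension $d$, so $\prod_i d_i=d$ and the protocol lies in $\mathrm{SLOCCQ}^{(d)}$.

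\emph{Main obstacle.} There is no real difficulty, but the only place that requires care is unpacking the Schmidt number correctly: the full state $\omega_{C_AC_B}$ need not itself be supported on a $d\times d$ subspace — only each pure component of an optimal decomposition is — so the compression of Bob's half must be done conditionally on the sampled index $i$. This is precisely what the free classical communication in $\mathrm{SLOCCQ}$ allows, which is why the bound $\prod_i d_i\leq d$ can be saturated with a single $d$-dimensional transmission.
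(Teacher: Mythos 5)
Your proof is correct and follows the same route as the paper: the first inclusion is immediate from the definitions once the catalyst-return constraint is dropped, and the second is exactly the paper's observation that a Schmidt-number-$d$ state can be distributed by sending one $d$-dimensional quantum system together with free classical communication. Your elaboration of the conditional compression on the sampled index $i$ is a correct and welcome filling-in of the detail the paper leaves implicit.
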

  
The first inclusion holds by definition. Here  it is implicit that in {\rm LOSE}$^{(d)}$, d also bounds the Schmidt number of the shared entangled state, which does not need to be returned. For the second inclusion note that any state with Schmidt number $d$ can be prepared by sending a quantum system of dimension $d$ from Alice to Bob, along with classical communication.

% \begin{observation}
%   {\rm CLO}$^{(d)}\subseteq$ {\rm SLOCCQ}$^{(d)}$.  Any catalyst of Schmidt rank $d$ can be prepared by sending a quantum system of dimension $d$ from Alice to Bob, along with classical communication.
% \end{observation}

\begin{observation} 
    {\rm CLO}$^{(1)}=$ {\rm LOSR}. 
\end{observation}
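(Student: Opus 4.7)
The plan is to prove the two inclusions separately, exploiting the fact that $\mathrm{SN}(\omega_{C_AC_B})\leq 1$ is equivalent to $\omega_{C_AC_B}$ being separable. Hence the catalyst admits a product decomposition $\omega_{C_AC_B}=\sum_i p_i\,\proj{\alpha_i}_{C_A}\otimes\proj{\beta_i}_{C_B}$ for some probability distribution $\{p_i\}$ and local pure states $|\alpha_i\rangle,|\beta_i\rangle$.

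For CLO$^{(1)}\subseteq$ LOSR, I would take any CLO$^{(1)}$ tuple $(\mathcal{A},\mathcal{B},\omega_{C_AC_B})$ and simply substitute the above separable decomposition of $\omega_{C_AC_B}$ into the definition of the output state:
\[
\sigma_{A'B'}=\tr_{C_AC_B}\!\big[(\mathcal{A}\otimes\mathcal{B})(\rho_{AB}\otimes\omega_{C_AC_B})\big]=\sum_i p_i\,\mathcal{A}_i\otimes\mathcal{B}_i(\rho_{AB}),
\]
where $\mathcal{A}_i(\cdot):=\tr_{C_A}\mathcal{A}(\cdot\otimes\proj{\alpha_i}_{C_A})$ is a CPTP map on Alice's system alone (as a composition of the CPTP map $\mathcal{A}$ with preparation of $\proj{\alpha_i}$ and partial trace), and $\mathcal{B}_i$ is defined analogously on Bob's side. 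This expression is exactly an LOSR transformation with shared distribution $\{p_i\}$. Notice that the catalyst-preservation constraint is not required in this direction; it only serves to certify that $(\mathcal{A},\mathcal{B},\omega)$ is admissible as CLO$^{(1)}$ in the first place, and once it is, the output automatically admits an LOSR implementation.

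For the converse LOSR$\subseteq$ CLO$^{(1)}$, I would take an arbitrary LOSR protocol $\{p_i,\mathcal{A}_i,\mathcal{B}_i\}$ and explicitly construct a CLO$^{(1)}$ realization of it. The natural choice is the classically correlated catalyst $\omega_{C_AC_B}=\sum_i p_i\,\proj{i}_{C_A}\otimes\proj{i}_{C_B}$ in some orthonormal basis $\{|i\rangle\}$, which is manifestly separable. Alice's CPTP map first dephases $C_A$ in the basis $\{|i\rangle\}$, then applies $\mathcal{A}_i$ on $A$ conditional on the outcome, and finally re-prepares $|i\rangle$ on $C_A$; Bob's map is defined analogously. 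Tracing out $A'B'$ returns $\omega_{C_AC_B}$ (so catalyst preservation holds) while tracing out $C_AC_B$ yields $\sum_i p_i\,\mathcal{A}_i\otimes\mathcal{B}_i(\rho_{AB})$, which is the LOSR output, thanks to the perfectly correlated classical flags read out independently by Alice and Bob.

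I do not anticipate a genuine obstacle. The only care required is to write the read/apply/re-prepare operation as a bona fide CPTP map on $AC_A$ and to check, by direct computation on the diagonal input $\sum_i p_i\proj{i}_{C_A}$, that Alice's marginal of the catalyst is left unchanged; the same check applies on Bob's side. Conceptually, the statement formalizes the intuition that a Schmidt-number-$1$ catalyst encodes only classical correlations, which can equivalently be supplied as shared randomness.
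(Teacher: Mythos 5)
Your proof is correct and follows essentially the same route as the paper, which justifies the observation in two informal sentences: a separable (Schmidt-number-one) catalyst can be simulated by shared randomness and local operations, and conversely shared randomness acts as a catalyst since correlating it with the systems leaves its marginal unchanged. Your write-up simply makes both inclusions explicit, including the correct observation that catalyst preservation is only needed in the LOSR$\subseteq$CLO$^{(1)}$ direction.
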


On the one hand, shared randomness can be viewed as a catalyst---the correlations built with the systems do not change its marginal state. On the other, any separable catalyst can be prepared from shared randomness with local operations.

% \ps{Consequence -- separable (classically correlated) states are free in CLO$^{(1)}$, so it is also a natural way to signal out quantum correlations.}

This observation implies that separable states are free in CLO$^{(1)}$. Hence, any correlation that can be prepared solely using CLO$^{(1)}$ is, by construction, classical. At the same time, any correlations that cannot be prepared with CLO$^{(1)}$ must be genuinely quantum. This provides CLO$^{(1)}$ with a natural operational interpretation: in a manner analogous to LOCC, they allow one to single out classical from quantum correlations.

\begin{observation}
    {\rm LOCC} $\not\subset$  {\rm CLO}$^{(d)}$.
\end{observation}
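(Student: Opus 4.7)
I would exhibit a concrete transformation realizable by LOCC but impossible under CLO$^{(d)}$ for any $d$. The driving principle is no-signaling: since $\mathcal{A}$ acts only on $AC_A$ and an external reference $R$ is never touched, no catalyst—however large its Schmidt number—can route classical information from Alice to Bob in the way that classical communication does in LOCC.

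Concretely, take as input $\rho_{ABR} = \bigl(\sum_{x=1}^N \tfrac{1}{N}|xx\rangle\langle xx|_{AR}\bigr) \otimes |0\rangle\langle 0|_B$, in which an inaccessible reference $R$ carries a classical copy of Alice's register, and as target $\sigma_{A'B'R} = \mu_{A'} \otimes \sum_x \tfrac{1}{N}|xx\rangle\langle xx|_{B'R}$ (with $\mu_{A'}$ arbitrary). This transformation lies trivially in LOCC: Alice measures $A$ in the computational basis, transmits the outcome $x$ classically to Bob, and Bob prepares $|x\rangle_{B'}$.

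For the impossibility under CLO$^{(d)}$, given any candidate $\{\mathcal{A}, \mathcal{B}, \omega_{C_AC_B}\}$ I would trace $A'C_A$ out of $\tau = (\mathcal{A} \otimes \mathcal{B})(\rho_{ABR} \otimes \omega_{C_AC_B})$; by trace-preservation of $\mathcal{A}$ and the fact that $R$ is untouched, this reduces to $\mathcal{B}(\rho_{BR} \otimes \omega_{C_B})$. Since $\rho_{BR} = |0\rangle\langle 0|_B \otimes \rho_R$ is a product, so is $\tau_{B'R} = \eta_{B'} \otimes \rho_R$; hence $I(B':R)_\tau = 0$, contradicting $I(B':R)_\sigma = \log N$ demanded by the target.

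I do not foresee a substantive obstacle. The argument is independent of the Schmidt number of $\omega$ and does not even invoke the catalyst-preservation condition; it rests solely on the local structure of $\mathcal{A}\otimes \mathcal{B}$ across the cut $AC_A \,|\, BC_B R$. Conceptually, this reflects the operational point that a catalyst is a correlational resource but cannot substitute for the explicit Alice-to-Bob classical channel that the chosen task intrinsically requires.
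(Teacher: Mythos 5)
Your proposal is correct and follows essentially the same route as the paper: a classical flag correlated with an untouched reference $R$ must be moved from one party to the other, which one bit of classical communication achieves but which no local channel on the receiving party's side (catalyst included) can, since that side starts out in a product state with $R$. The only differences are cosmetic --- you swap the roles of Alice and Bob and conclude via the product structure of $\tau_{B'R}$ (i.e.\ $I(B'\!:\!R)=0$) rather than via the data-processing inequality for $H(R|AC_A)$ as the paper does --- and, like the paper, you correctly observe that the argument never needs the catalyst-preservation condition.
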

    
Consider the following classical states
$\rho_{ABR} = \sum_{x=0}^1 \frac{1}{2} [0]_A \otimes [x]_B \otimes [x]_R$ and $\sigma_{ABR} = \sum_{x=0}^1 \frac{1}{2} [x]_A \otimes [0]_B \otimes [x]_R$.
It is immediate to see that one bit of classical communication from Bob to Alice allows the transformation
\begin{align}\nonumber
\rho_{ABR}\;\xrightarrow{\;\text{LOCC}\;}\;\sigma_{ABR}.
\end{align}
Furthermore, this bit of communication is  {\it necessary} for Alice and Bob to do the transformation, which is thus impossible by local operations with shared entanglement (LOSE) and hence also with CLO. To see this formally, let $\omega_{C_A C_B}$ be the auxiliary entanglement shared between Alice and Bob and assume that the LOSE transformation is possible. Then, the marginal state $\rho_{AC_A R}= ([0] \otimes \omega)_{AC_A} \otimes (\sum_x \frac{1}{2}[x])_R$ can be transformed to $\sigma_{AR}= \sum_x \frac{1}{2} [x]_A \otimes[x]_R$ by means of a local operation $\mathcal{A}:AC_A\to A$ applied by Alice. This is manifestly a contradiction, %\footnote{E.g. it contradicts the data processing inequality for the quantum relative entropy $H(R|AC_A)$.}.
since it violates the data processing inequality for the quantum relative entropy $H(R|AC_A)$.

\vspace{0.5cm}

A more interesting and arguably  less intuitive fact is that there exist state transformations in ${\rm CLO}^{(d)}$ that cannot be realized with ${\rm SLOCCQ}^{(d-1)}$. This is the main result of this work.

\begin{theorem}\label{thm}
    {\rm CLO}$^{(d)} \not\subset$ {\rm SLOCCQ}$^{(d-1)}$  with $d = 2^n$.
\end{theorem}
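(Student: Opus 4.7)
My plan is to exhibit, for each $n$ with $d = 2^n$, an explicit state transformation $\rho_{AB} \to \sigma_{A'B'}$ that is realizable by some ${\rm CLO}^{(d)}$ protocol but not by any ${\rm SLOCCQ}^{(d-1)}$ protocol. The distinguishing monotone will be the Schmidt number ${\rm SN}$.

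The ${\rm SLOCCQ}^{(d-1)}$ bound is the easy direction. Since ${\rm SN}$ is non-increasing under local CPTP maps and multiplicative under tensor products, and since any quantum channel of dimension $k$ can be simulated by teleportation through a shared $k$-dimensional maximally entangled state (of Schmidt number $k$) together with LOCC, it follows that for every $\sigma' \in {\rm SLOCCQ}^{(d-1)}(\rho)$ one has ${\rm SN}(\sigma') \leq (d-1)\cdot {\rm SN}(\rho)$. Correspondingly, the catalyst in ${\rm CLO}^{(d)}$ contributes at most a factor $d$, giving ${\rm SN}(\sigma) \leq d\cdot {\rm SN}(\rho)$. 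It therefore suffices to realize, by a catalytic protocol, some $\sigma$ with ${\rm SN}(\sigma) > (d-1)\cdot {\rm SN}(\rho)$.

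For the constructive part, I would first observe that a \emph{pure} catalyst cannot produce such an enhancement: if $\omega = |\omega\rangle\langle\omega|$, the marginal constraint ${\rm tr}_{A'B'}\tau = |\omega\rangle\langle\omega|$ combined with the rigidity of purifications forces $\tau = |\omega\rangle\langle\omega|\otimes \sigma$, whence ${\rm SN}(\sigma) = {\rm SN}(\tau)/{\rm SN}(\omega) \leq {\rm SN}(\rho)$, i.e., no enhancement at all. I would therefore take $\omega_{C_AC_B}$ to be a carefully engineered \emph{mixed} state of Schmidt number exactly $d=2^n$, built from $n$ qubit pairs in a tensor-product fashion so that the binary dimension can be exploited. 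The local CPTP maps $\mathcal{A}, \mathcal{B}$ would act coherently on the qubit constituents of the catalyst, transferring its entanglement structure to the system marginal $\sigma_{A'B'}$, while invariance of the catalyst marginal is arranged through a cancellation among the components of the mixture that the unitary dilations of $\mathcal{A}\otimes\mathcal{B}$ realize.

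The main obstacle is the exact catalyst-preservation condition. The equation ${\rm tr}_{A'B'}(\mathcal{A}\otimes\mathcal{B})(\rho\otimes\omega) = \omega$ is a rigid linear constraint on the tuple $(\mathcal{A}, \mathcal{B}, \omega, \rho)$ that must be reconciled with a strict Schmidt-number enhancement in the system marginal. In the mixed-catalyst regime the purification argument no longer pins down $\tau$, which is precisely what creates room for an advantage, but it also means that exact preservation must be verified by hand. The tensor-product form $d = 2^n$ strongly hints at a qubit-wise construction, presumably leveraging symmetries of the $n$-qubit catalyst (such as a Bell-basis or Pauli-group action on its spectral decomposition) under which $\mathcal{A}\otimes\mathcal{B}$ can be made to act transparently on the marginal of $\omega$; checking this invariance while simultaneously lifting the Schmidt number of $\sigma$ above $(d-1)\cdot{\rm SN}(\rho)$ will be the central technical step.
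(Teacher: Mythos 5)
Your upper-bound half is sound and matches the paper's Lemma~2: the Schmidt number is an SLOCC monotone, sub-multiplicative under tensor products, and each round of quantum communication of dimension $d_i$ can inflate it by at most a factor $d_i$ (e.g.\ via the teleportation simulation you describe), so any $\sigma'\in{\rm SLOCCQ}^{(d-1)}(\rho)$ obeys ${\rm SN}(\sigma')\leq (d-1)\,{\rm SN}(\rho)$. Your observation that a pure catalyst is useless (a pure marginal forces $\tau=\omega\otimes\sigma$) is also a correct and relevant remark.

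The genuine gap is the constructive half, which is the heart of the theorem and which you explicitly leave open (``checking this invariance \ldots will be the central technical step''). You need an actual CLO$^{(d)}$ protocol whose output has Schmidt number exceeding $(d-1)\,{\rm SN}(\rho)$, and your envisioned route --- a qubit-wise catalyst with a Bell-basis or Pauli-group action whose components ``cancel'' to preserve the marginal --- is not substantiated and is not how the known construction works. The paper instead invokes a specific protocol (due to Duan et al.\ and used in the catalytic-nonlocality context): take $\rho_{AB}=\ketbra{\Phi^+}{\Phi^+}$, a product state $\sigma_{AB}=\ketbra{22}{22}$, and the catalyst
\begin{equation}
\omega_{C_AC_B}=\frac{1}{n+1}\sum_{i=0}^{n}\rho_{AB}^{\otimes i}\otimes\sigma_{AB}^{\otimes(n-i)},
\end{equation}
which is a classical mixture over the \emph{number of copies} of $\rho$ it contains, has Schmidt number $2^n=d$, and is exactly preserved when Alice and Bob each apply a cyclic permutation of their local subsystems (a local unitary, no coherent ``transfer'' needed). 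The output is $\frac{1}{n+1}\rho^{\otimes(n+1)}+\frac{n}{n+1}\sigma^{\otimes(n+1)}$ with Schmidt number $2^{n+1}=2d>(d-1)\cdot 2$. Without exhibiting such a protocol and verifying both the exact marginal preservation and the Schmidt number of the output state, your argument establishes only that a separation is \emph{not ruled out} by the monotone, not that it occurs.
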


To demonstrate this result, we will make use of the two following lemmas. The first shows the possibility of certain transformations within ${\rm CLO}^{(d)}$, and was established in Ref.~\cite{bavaresco2025catalytic} in the context of catalytic activation of Bell nonlocality, inspired by the catalytic LOCC transformation introduced in Ref. \cite{duan2005multiple}. 

\begin{lemma} \label{lem: cata}
{\rm CLO}$^{(d)}$ with $d=2^{n-1}$ allows any state $\rho_{AB}$ to be transformed into
\begin{equation}\label{eq: out catalyst n}
    \tau_{A' B'} =  \frac{1}{n}\, \rho_{AB}^{\otimes n} + \frac{n-1}{n}\, \sigma_{AB}^{\otimes n}\, ,
\end{equation}
where $A' \equiv A_1\dots A_n$ and $B' \equiv B_1\dots B_n$ are composed of $n\in \mathbb{N}$ copies of the system $A$ and $B$, and where $\sigma_{AB}=\sigma_A \otimes \sigma_B$ is an arbitrary product state, using a catalyst given by
\begin{equation}
    \omega_{C_A C_B} = \frac{1}{n}  \sum_{i = 0}^{n-1} \rho_{AB}^{\otimes i} \ot \sigma_{AB}^{\otimes (n-1-i)} \, ,
\end{equation}
where $C_A\equiv A_1\dots A_{n-1}$ and $C_B\equiv B_1\dots B_{n-1}$ are composed of $n-1$ copies of the system $A$ and $B$.
\end{lemma}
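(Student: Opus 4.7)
The plan is to construct an explicit CLO$^{(d)}$ protocol realizing the transformation, which requires bounding the Schmidt number of the candidate catalyst and exhibiting local CPTP maps with the required marginals. First, verify the Schmidt-number bound. The catalyst $\omega_{C_AC_B}$ is a uniform convex combination of product-type states $\omega^{(i)}:=\rho^{\otimes i}\otimes\sigma^{\otimes(n-1-i)}$. Since $\sigma$ is product (so ${\rm SN}(\sigma)=1$) and the Schmidt number is sub-multiplicative under tensor products, ${\rm SN}(\omega^{(i)})\leq ({\rm SN}(\rho))^i\leq 2^{n-1}$ whenever $\rho$ has Schmidt number at most $2$. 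Since the Schmidt number of a convex combination does not exceed the maximum Schmidt number of its summands, ${\rm SN}(\omega_{C_AC_B})\leq 2^{n-1}=d$.

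Next, to construct the local maps, rewrite the initial state as
\begin{equation}
\rho_{AB}\otimes\omega_{C_AC_B}=\frac{1}{n}\sum_{j=1}^{n}\rho^{\otimes j}\otimes\sigma^{\otimes(n-j)},
\end{equation}
a uniform mixture over branches indexed by $j$, the number of $\rho$-pairs across the $n$ initial slots. Alice and Bob each locally prepare $n-1$ fresh ancilla pairs in the product state $\sigma_{AB}^{\otimes(n-1)}$, which is possible since $\sigma=\sigma_A\otimes\sigma_B$. They then apply a coordinated local permutation on their $2n-1$ pair-slots, designed so that the $j=n$ branch is sent to the configuration with output $\rho^{\otimes n}$ and catalyst $\sigma^{\otimes(n-1)}$, and each branch with $j<n$ is sent to the configuration with output $\sigma^{\otimes n}$ and catalyst $\rho^{\otimes j}\otimes\sigma^{\otimes(n-1-j)}$. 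Taking partial traces of the resulting joint state branch-by-branch then yields the claimed marginals $\tau_{A'B'}$ and $\omega_{C_AC_B}$.

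The main obstacle is the explicit construction of this coordinated permutation: without classical communication, Alice and Bob must jointly implement a branch-dependent redistribution of $\rho$- and $\sigma$-pairs between the output and the returned catalyst, even though neither party can directly read the classical label $j$ from the mixture. The key idea for circumventing this is that the product structure of $\sigma$ allows both parties to locally supply as many $\sigma$-pairs as needed, so the redistribution can be realized by a fixed pair-wise swap between the input/catalyst slots and the fresh $\sigma$-ancillas, composed with a cyclic shift. Verifying that this unconditional permutation produces the correct branch-wise behavior is a direct but delicate computation that follows the catalytic construction of Ref.~\cite{bavaresco2025catalytic}, itself adapted from the multiple-copy entanglement transformation of Ref.~\cite{duan2005multiple}.
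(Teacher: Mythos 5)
Your setup is faithful to the construction the paper imports from Refs.~\cite{bavaresco2025catalytic,duan2005multiple} (the paper itself only cites these works and gives no proof): the Schmidt-number bound on the catalyst, the rewriting of $\rho_{AB}\otimes\omega_{C_AC_B}$ as the uniform mixture $\frac1n\sum_{j=1}^n\rho^{\otimes j}\otimes\sigma^{\otimes(n-j)}$, and the branch-wise target assignment are all correct. The gap sits exactly at the step you defer as ``direct but delicate'': the claim that the redistribution can be realized by a \emph{fixed} (unconditional) coordinated permutation of pair-slots is not just unverified, it is false. A coordinated permutation $\pi$ of whole pairs simply relocates the state of each slot, identically in every branch. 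Track the input register $S$ (assume $\rho\neq\sigma$, otherwise the lemma is trivial). In the branch $j=1$, $S$ is the only slot carrying $\rho$, and the returned catalyst must contain exactly one copy of $\rho$, in slot $C_1'$; hence $\pi(S)=C_1'$. In the branch $j=n$, slot $S$ again carries $\rho$, so $\pi$ would deposit a $\rho$ in $C_1'$, whereas the target requires every catalyst slot to carry $\sigma$ and all $n$ copies of $\rho$ to appear in the output. No single permutation satisfies both constraints, and composing fixed swaps with a fixed cyclic shift is still a fixed permutation, so it does not escape this obstruction.

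What actually makes the protocol local is that the routing is \emph{conditioned on the branch}. In Ref.~\cite{bavaresco2025catalytic} the catalyst carries, besides the $n-1$ pairs, a perfectly correlated classical flag, $\frac1n\sum_{i=0}^{n-1}\bigl(\rho^{\otimes i}\otimes\sigma^{\otimes(n-1-i)}\bigr)\otimes[i]_{F_A}\otimes[i]_{F_B}$. Each party reads their local flag; for $i<n-1$ they swap $S\leftrightarrow C_{i+1}$ and output a locally prepared $\sigma^{\otimes n}$, for $i=n-1$ they route all $n$ pairs of $\rho$ to the output and refill the catalyst slots with $\sigma^{\otimes(n-1)}$; in both cases they update the flag as $i\mapsto i+1\bmod n$. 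This cyclic update preserves the joint flag--content distribution exactly, and the classical flag affects neither the Schmidt number nor the exact-return condition (consistent with the paper's remark that only the quantum dimension of the catalyst is constrained). In the instance used for Theorem~\ref{thm}, where $\rho=\ketbra{\Phi^+}$ and $\sigma=\ketbra{22}$ are locally perfectly distinguishable, the flag can be read off the catalyst itself, but for general $\rho,\sigma$ the explicit flag is needed. Your remaining caveat is apt and worth keeping: the bound $d=2^{n-1}$ only follows when ${\rm SN}(\rho_{AB})\le 2$; for general $\rho_{AB}$ the catalyst has ${\rm SN}\le{\rm SN}(\rho_{AB})^{n-1}$.
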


The second lemma guarantees that the Schmidt number of the initial state cannot change arbitrarily with ${\rm SLOCCQ}^{(d)}$ operations.

\begin{lemma} \label{lem: SN}
  A state $\tau_{A'B'}$ cannot be prepared form $\rho_{AB}$ via {\rm SLOCCQ}$^{(d)}$ if 
\begin{align}
    d< \frac{{\rm SN}(\tau_{A'B'})}{{\rm SN}(\rho_{AB})}.
\end{align}
If $\rho_{AB}$ is pure this is an if and only if.
\end{lemma}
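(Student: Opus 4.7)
The strategy is to track how the Schmidt number can grow along any ${\rm SLOCCQ}^{(d)}$ protocol. I would establish three facts: (i) deterministic local operations and classical communication do not increase ${\rm SN}$; (ii) stochastic post-selection on local Kraus operators does not increase ${\rm SN}$; and (iii) a single round in which one party transmits a quantum system of dimension $d_i$ to the other can multiply ${\rm SN}$ by at most $d_i$. Chaining (i)--(iii) yields ${\rm SN}(\tau_{A'B'}) \leq {\rm SN}(\rho_{AB})\prod_i d_i \leq d\cdot {\rm SN}(\rho_{AB})$, which directly contradicts $d < {\rm SN}(\tau_{A'B'})/{\rm SN}(\rho_{AB})$ and thus settles the impossibility direction.

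To prove (i) and (ii), I would start from a pure-state decomposition $\rho_{AB} = \sum_j p_j \proj{\psi_j}$ attaining ${\rm SN}(\rho_{AB})$, so each $\ket{\psi_j}$ has Schmidt rank at most ${\rm SN}(\rho_{AB})$. A local Kraus operator $K_A \otimes K_B$ sends $\ket{\psi_j}$ to a vector whose Schmidt rank is no larger, so the updated state---whether deterministic or post-selected---retains the ${\rm SN}$ bound. Classical messages enter only as classical registers inside a convex decomposition and therefore leave ${\rm SN}$ unchanged. For (iii) the key elementary observation is that if $\ket{\psi}_{AEB}$ has Schmidt rank $r \leq k$ on the cut $AE|B$, then expanding each Schmidt vector of $AE$ in a product basis of $A$ and $E$ yields at most $r\,\dim(E)$ terms on the cut $A|EB$; hence the Schmidt rank on the new cut is bounded by $k\,\dim(E)$. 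Extending to mixed states by propagating the pure decomposition attaining ${\rm SN}$, one transmission of a $d_i$-dimensional system multiplies ${\rm SN}$ by at most $d_i$, and iterating across rounds reproduces the $\prod_i d_i \leq d$ bound.

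For the converse when $\rho_{AB} = \proj{\phi}$ is pure of Schmidt rank $k$, I would construct an explicit ${\rm SLOCCQ}^{(d)}$ protocol preparing any $\tau_{A'B'}$ with ${\rm SN}(\tau_{A'B'}) \leq kd$. Alice applies a stochastic local filter to distill the maximally entangled state $\ket{\Phi^+_k}$ from $\ket{\phi}$ with nonzero probability; she then locally prepares an auxiliary $\ket{\Phi^+_d}$ and transmits its $d$-dimensional half to Bob, so the parties share a state isomorphic to $\ket{\Phi^+_{kd}}$. Standard teleportation followed by convex mixing coordinated by classical communication then prepares any target with Schmidt number $\leq kd$, yielding the ``if and only if''. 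The main subtlety I anticipate is in step (iii): because ${\rm SN}$ is defined by an infimum over pure decompositions, one must exhibit for the post-transmission state a single decomposition in which every term simultaneously meets the rank bound, rather than merely bounding ranks along an arbitrary decomposition. This is handled by explicitly carrying the decomposition that attains ${\rm SN}(\rho_{AB})$ through every step of the protocol.
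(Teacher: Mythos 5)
Your proposal is correct and follows essentially the same route as the paper: bound the Schmidt number branch-by-branch under stochastic local operations and classical communication, show that each transmission of a $d_i$-dimensional system can multiply it by at most $d_i$, and for the converse filter the pure input to a maximally entangled state and then send a $d$-dimensional half of $\ket{\Phi^+_d}$. Your write-up merely supplies details the paper leaves implicit (the product-basis expansion in step (iii) and the remark about carrying a single optimal decomposition through the protocol), so the two arguments coincide in substance.
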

\begin{proof} We start with the ``if'' part. Let us consider a general {\rm SLOCCQ}$^{(d)}$ transformation that acts as
\begin{align}
    \rho_{AB} \rightarrow \tau_{A'B'} = \sum_{i} p(i) \tau_{A'B'}^{(i)},
\end{align}
where $p(i)$ is the probability of branch $i$ and $\tau_{A'B'}^{(i)}$ is the (normalized) post-measurement state. Let us observe that if $\rho_{AB}$ has Schmidt number ${\rm SN}(\rho_{AB}) = k$, then for any branch $i$ of the protocol, the post-measurement state $\tau_{A'B'}^{(i)}$ satisfies ${\rm SN}(\tau_{A'B'}^{(i)}) \leq k$ and thus the Schmidt number of $\tau_{A'B'}$ will be at most ${\rm SN}(\tau_{A'B'}) \leq k$. Notice now that sending a quantum system of dimension $d_i$ can increase the Schmidt number of the state by at most
    \begin{equation}
       { \rm SN}(\tau_{AB}^{i+1}) \leq  d_i \cdot { \rm SN}(\tau_{AB}^{i}),
    \end{equation}
in all branches of the protocol. This proves the ``if'' direction of the lemma.
    %1. SLO and CC can not increase the SN (in all branches of the protocol).
    
    %2. Sending a quantum system of dimension $d_i$ can increase the SN of the state by at most
    %\begin{equation}
    %   { \rm SN}(\rho_{AB}^{i+1}) \leq  d_i { \rm SN}(\rho_{AB}^{i})
    %\end{equation}
   % in all branches of the protocol.

    To prove the ``only if'' part, notice that if $\rho_{AB}$ is pure, it can be filtered to the maximally entangled state in dimension ${\rm SN}(\rho_{AB})$ using SLO. Then by sending a quantum system of dimension $d$, Alice and Bob establish a maximally entangled state of dimension $d' =d \cdot {\rm SN}(\rho_{AB})$. From there any state $\tau_{A'B'}$ with ${\rm SN}(\tau_{A'B'})\leq d'$ can be prepared with LOCC. %This proves the ``only if'' part of the lemma.
\end{proof}

\begin{proof} {\it of Theorem~\ref{thm}.}
    Let $\rho_{AB}= \ketbra{\Phi^+}$ with $\ket{\Phi^+}=\frac{1}{\sqrt 2} (\ket{00}+\ket{11})$ be the maximally entangled two qubit state and let $\sigma_{AB}=\ketbra{22}_{AB}$. By Lemma~\ref{lem: cata}, $\rho_{AB}$ can be transformed to 
    \begin{equation}
        \label{eq:tau_prop1}
        \tau_{A'B'} = \frac{1}{n+1} \ketbra{\Phi^+}^{\otimes(n+1)}  +\frac{n}{n+1} \ketbra{22}^{\otimes(n+1)}
    \end{equation}
    via CLO$^{(2^{n})}$. We have 
    \begin{align}
        {\rm SN}(\rho_{AB})&=2 \\
        {\rm SN}(\tau_{A'B'})&=2^{n+1}.
    \end{align}
    On the other hand, by Lemma~\ref{lem: SN} $\rho_{AB}$ cannot be transformed to  $\tau_{A'B'}$ with {\rm SLOCCQ}$^{(d)}$ if 
    \begin{equation}
        d < \frac{  {\rm SN}(\tau_{A'B'})}{{\rm SN}(\rho_{AB})} = 2^n,
    \end{equation}
    and $2^n-1<2^n$. Consequently, the state from Eq. \eqref{eq:tau_prop1} cannot be realised from $\rho_{AB}$ via ${\rm SLOCCQ}^{(2^n-1)}$.
\end{proof}
\vspace{0.5 cm}

To show the separation between CLO$^{(d)}$ and SLOCC$^{(d-1)}$, we used the fact that the former class allows to increase the Schmidt number of a state by a factor $d$. This is also the maximal possible increase saturating the bounds $\text{SN}(\tau_{A'B'})\leq \text{SN}(\rho_{AB}\otimes\omega_{C_AC_B})\leq \text{SN}(\rho_{AB}) \times d$.
This raises the natural question of which entanglement measures remain monotonic in presence of a catalyst, or otherwise how much can they increase. The former question was studied in the context of catalytic LOCC transformations. In particular, it was shown that a pure states transformation $\ket{\Psi}_{AB}\to \ket{\Phi}_{AB}$ is possible iff the (von Neumann) entanglement entropy is non-increasing, i.e., if $H(\tr_A \Psi)\geq H(\tr_A \Phi)$~\cite{kondra2021catalytic}. Hence, for pure states entanglement measures inequivalent to entanglement entropy can be increased with CLOCC. In the case of mixed states it is known~\cite{kondra2021catalytic}, that the squashed entanglement~\cite{christandl2004squashed} remains a monotone, since it is additive for product states $E_{sq}(\rho_{AB}\otimes \omega_{C_AC_B})=E_{sq}(\rho_{AB})+E_{sq}(\omega_{C_AC_B})$ and super-additive in general $E(\tau_{A'B'C_A C_B})\geq  E(\tau_{A'B'})+E( \tau_{C_AC_B}=\omega_{C_AC_B})$. Clearly, this conclusion remains true for CLO. Conversely, it is an interesting open question to understand which measures of entanglement, other than the Schmidt number, cease to be monotonic under CLO$^{(d)}$, and how does $d$, or other restrictions on the catalyst, limit their increase.

\section{Conclusions}
Our results contribute to the broader program of operational frameworks in quantum information theory. We show that catalysis can play a strong role as a resource enabling state transformations beyond what can be achieved with unlimited classical communication and limited quantum communication. Future work directions include further exploring and quantifying the trade-off between these resources, and investigating further implications of catalytic local transformations for the resource theories of quantum correlations.
\\

\noindent\textit{Acknowledgments.} We are grateful to Marco T\'ulio Quintino for discussions. The authors acknowledge funding from the Swiss National Science Foundation (SNSF) through NCCR SwissMAP (project~182902), project 219366 and the Swiss Postdoctoral Fellowship (project~216979). P.L.-B. also acknowledges funding from Polish National Agency for Academic Exchange (NAWA) through grant BPN/PPO/2023/1/00018/U/00001.

\bibliography{catalysis}

%apsrev4-2.bst 2019-01-14 (MD) hand-edited version of apsrev4-1.bst
%Control: key (0)
%Control: author (8) initials jnrlst
%Control: editor formatted (1) identically to author
%Control: production of article title (0) allowed
%Control: page (0) single
%Control: year (1) truncated
%Control: production of eprint (0) enabled
\begin{thebibliography}{25}%
\makeatletter
\providecommand \@ifxundefined [1]{%
 \@ifx{#1\undefined}
}%
\providecommand \@ifnum [1]{%
 \ifnum #1\expandafter \@firstoftwo
 \else \expandafter \@secondoftwo
 \fi
}%
\providecommand \@ifx [1]{%
 \ifx #1\expandafter \@firstoftwo
 \else \expandafter \@secondoftwo
 \fi
}%
\providecommand \natexlab [1]{#1}%
\providecommand \enquote  [1]{``#1''}%
\providecommand \bibnamefont  [1]{#1}%
\providecommand \bibfnamefont [1]{#1}%
\providecommand \citenamefont [1]{#1}%
\providecommand \href@noop [0]{\@secondoftwo}%
\providecommand \href [0]{\begingroup \@sanitize@url \@href}%
\providecommand \@href[1]{\@@startlink{#1}\@@href}%
\providecommand \@@href[1]{\endgroup#1\@@endlink}%
\providecommand \@sanitize@url [0]{\catcode `\\12\catcode `\$12\catcode
  `\&12\catcode `\#12\catcode `\^12\catcode `\_12\catcode `\%12\relax}%
\providecommand \@@startlink[1]{}%
\providecommand \@@endlink[0]{}%
\providecommand \url  [0]{\begingroup\@sanitize@url \@url }%
\providecommand \@url [1]{\endgroup\@href {#1}{\urlprefix }}%
\providecommand \urlprefix  [0]{URL }%
\providecommand \Eprint [0]{\href }%
\providecommand \doibase [0]{https://doi.org/}%
\providecommand \selectlanguage [0]{\@gobble}%
\providecommand \bibinfo  [0]{\@secondoftwo}%
\providecommand \bibfield  [0]{\@secondoftwo}%
\providecommand \translation [1]{[#1]}%
\providecommand \BibitemOpen [0]{}%
\providecommand \bibitemStop [0]{}%
\providecommand \bibitemNoStop [0]{.\EOS\space}%
\providecommand \EOS [0]{\spacefactor3000\relax}%
\providecommand \BibitemShut  [1]{\csname bibitem#1\endcsname}%
\let\auto@bib@innerbib\@empty
%</preamble>
\bibitem [{\citenamefont {Chitambar}\ and\ \citenamefont
  {Gour}(2019)}]{Chitambar_2019}%
  \BibitemOpen
  \bibfield  {author} {\bibinfo {author} {\bibfnamefont {E.}~\bibnamefont
  {Chitambar}}\ and\ \bibinfo {author} {\bibfnamefont {G.}~\bibnamefont
  {Gour}},\ }\bibfield  {title} {\bibinfo {title} {Quantum resource theories},\
  }\href {https://doi.org/10.1103/revmodphys.91.025001} {\bibfield  {journal}
  {\bibinfo  {journal} {Reviews of Modern Physics}\ }\textbf {\bibinfo {volume}
  {91}},\ \bibinfo {pages} {025001} (\bibinfo {year} {2019})},\ \Eprint
  {https://arxiv.org/abs/1806.06107} {arXiv:1806.06107 [quant-ph]} \BibitemShut
  {NoStop}%
\bibitem [{\citenamefont {Horodecki}\ \emph {et~al.}(2009)\citenamefont
  {Horodecki}, \citenamefont {Horodecki}, \citenamefont {Horodecki},\ and\
  \citenamefont {Horodecki}}]{RevModPhys.81.865}%
  \BibitemOpen
  \bibfield  {author} {\bibinfo {author} {\bibfnamefont {R.}~\bibnamefont
  {Horodecki}}, \bibinfo {author} {\bibfnamefont {P.}~\bibnamefont
  {Horodecki}}, \bibinfo {author} {\bibfnamefont {M.}~\bibnamefont
  {Horodecki}},\ and\ \bibinfo {author} {\bibfnamefont {K.}~\bibnamefont
  {Horodecki}},\ }\bibfield  {title} {\bibinfo {title} {Quantum entanglement},\
  }\href {https://doi.org/10.1103/RevModPhys.81.865} {\bibfield  {journal}
  {\bibinfo  {journal} {Rev. Mod. Phys.}\ }\textbf {\bibinfo {volume} {81}},\
  \bibinfo {pages} {865} (\bibinfo {year} {2009})},\ \Eprint
  {https://arxiv.org/abs/quant-ph/070222} {arXiv:quant-ph/070222} \BibitemShut
  {NoStop}%
\bibitem [{\citenamefont {Brunner}\ \emph {et~al.}(2014)\citenamefont
  {Brunner}, \citenamefont {Cavalcanti}, \citenamefont {Pironio}, \citenamefont
  {Scarani},\ and\ \citenamefont {Wehner}}]{Brunner_2014}%
  \BibitemOpen
  \bibfield  {author} {\bibinfo {author} {\bibfnamefont {N.}~\bibnamefont
  {Brunner}}, \bibinfo {author} {\bibfnamefont {D.}~\bibnamefont {Cavalcanti}},
  \bibinfo {author} {\bibfnamefont {S.}~\bibnamefont {Pironio}}, \bibinfo
  {author} {\bibfnamefont {V.}~\bibnamefont {Scarani}},\ and\ \bibinfo {author}
  {\bibfnamefont {S.}~\bibnamefont {Wehner}},\ }\bibfield  {title} {\bibinfo
  {title} {Bell nonlocality},\ }\href
  {https://doi.org/10.1103/revmodphys.86.419} {\bibfield  {journal} {\bibinfo
  {journal} {Reviews of Modern Physics}\ }\textbf {\bibinfo {volume} {86}},\
  \bibinfo {pages} {419–478} (\bibinfo {year} {2014})},\ \Eprint
  {https://arxiv.org/abs/1303.2849} {arXiv:1303.2849 [quant-ph]} \BibitemShut
  {NoStop}%
\bibitem [{\citenamefont {Modi}\ \emph {et~al.}(2012)\citenamefont {Modi},
  \citenamefont {Brodutch}, \citenamefont {Cable}, \citenamefont {Paterek},\
  and\ \citenamefont {Vedral}}]{ModiReview}%
  \BibitemOpen
  \bibfield  {author} {\bibinfo {author} {\bibfnamefont {K.}~\bibnamefont
  {Modi}}, \bibinfo {author} {\bibfnamefont {A.}~\bibnamefont {Brodutch}},
  \bibinfo {author} {\bibfnamefont {H.}~\bibnamefont {Cable}}, \bibinfo
  {author} {\bibfnamefont {T.}~\bibnamefont {Paterek}},\ and\ \bibinfo {author}
  {\bibfnamefont {V.}~\bibnamefont {Vedral}},\ }\bibfield  {title} {\bibinfo
  {title} {The classical-quantum boundary for correlations: Discord and related
  measures},\ }\href {https://doi.org/10.1103/RevModPhys.84.1655} {\bibfield
  {journal} {\bibinfo  {journal} {Rev. Mod. Phys.}\ }\textbf {\bibinfo {volume}
  {84}},\ \bibinfo {pages} {1655} (\bibinfo {year} {2012})},\ \Eprint
  {https://arxiv.org/abs/1112.6238} {arXiv:1112.6238 [quant-ph]} \BibitemShut
  {NoStop}%
\bibitem [{\citenamefont {Uola}\ \emph {et~al.}(2020)\citenamefont {Uola},
  \citenamefont {Costa}, \citenamefont {Nguyen},\ and\ \citenamefont
  {G\"uhne}}]{uolaReview}%
  \BibitemOpen
  \bibfield  {author} {\bibinfo {author} {\bibfnamefont {R.}~\bibnamefont
  {Uola}}, \bibinfo {author} {\bibfnamefont {A.~C.~S.}\ \bibnamefont {Costa}},
  \bibinfo {author} {\bibfnamefont {H.~C.}\ \bibnamefont {Nguyen}},\ and\
  \bibinfo {author} {\bibfnamefont {O.}~\bibnamefont {G\"uhne}},\ }\bibfield
  {title} {\bibinfo {title} {Quantum steering},\ }\href
  {https://doi.org/10.1103/RevModPhys.92.015001} {\bibfield  {journal}
  {\bibinfo  {journal} {Rev. Mod. Phys.}\ }\textbf {\bibinfo {volume} {92}},\
  \bibinfo {pages} {015001} (\bibinfo {year} {2020})},\ \Eprint
  {https://arxiv.org/abs/1903.06663} {arXiv:1903.06663 [quant-ph]} \BibitemShut
  {NoStop}%
\bibitem [{\citenamefont {Chitambar}\ \emph {et~al.}(2014)\citenamefont
  {Chitambar}, \citenamefont {Leung}, \citenamefont {Mančinska}, \citenamefont
  {Ozols},\ and\ \citenamefont {Winter}}]{Chitambar_2014}%
  \BibitemOpen
  \bibfield  {author} {\bibinfo {author} {\bibfnamefont {E.}~\bibnamefont
  {Chitambar}}, \bibinfo {author} {\bibfnamefont {D.}~\bibnamefont {Leung}},
  \bibinfo {author} {\bibfnamefont {L.}~\bibnamefont {Mančinska}}, \bibinfo
  {author} {\bibfnamefont {M.}~\bibnamefont {Ozols}},\ and\ \bibinfo {author}
  {\bibfnamefont {A.}~\bibnamefont {Winter}},\ }\bibfield  {title} {\bibinfo
  {title} {{Everything You Always Wanted to Know About LOCC (But Were Afraid to
  Ask)}},\ }\href {https://doi.org/10.1007/s00220-014-1953-9} {\bibfield
  {journal} {\bibinfo  {journal} {Communications in Mathematical Physics}\
  }\textbf {\bibinfo {volume} {328}},\ \bibinfo {pages} {303–326} (\bibinfo
  {year} {2014})},\ \Eprint {https://arxiv.org/abs/1210.4583} {arXiv:1210.4583
  [quant-ph]} \BibitemShut {NoStop}%
\bibitem [{\citenamefont {Jonathan}\ and\ \citenamefont
  {Plenio}(1999)}]{Jonathan_1999}%
  \BibitemOpen
  \bibfield  {author} {\bibinfo {author} {\bibfnamefont {D.}~\bibnamefont
  {Jonathan}}\ and\ \bibinfo {author} {\bibfnamefont {M.~B.}\ \bibnamefont
  {Plenio}},\ }\bibfield  {title} {\bibinfo {title} {Entanglement-assisted
  local manipulation of pure quantum states},\ }\href
  {https://doi.org/10.1103/physrevlett.83.3566} {\bibfield  {journal} {\bibinfo
   {journal} {Phys. Rev. Lett.}\ }\textbf {\bibinfo {volume} {83}},\ \bibinfo
  {pages} {3566–3569} (\bibinfo {year} {1999})},\ \Eprint
  {https://arxiv.org/abs/quant-ph/9905071} {arXiv:quant-ph/9905071}
  \BibitemShut {NoStop}%
\bibitem [{\citenamefont {Bennett}\ \emph {et~al.}(2000)\citenamefont
  {Bennett}, \citenamefont {Popescu}, \citenamefont {Rohrlich}, \citenamefont
  {Smolin},\ and\ \citenamefont {Thapliyal}}]{bennett2000exact}%
  \BibitemOpen
  \bibfield  {author} {\bibinfo {author} {\bibfnamefont {C.~H.}\ \bibnamefont
  {Bennett}}, \bibinfo {author} {\bibfnamefont {S.}~\bibnamefont {Popescu}},
  \bibinfo {author} {\bibfnamefont {D.}~\bibnamefont {Rohrlich}}, \bibinfo
  {author} {\bibfnamefont {J.~A.}\ \bibnamefont {Smolin}},\ and\ \bibinfo
  {author} {\bibfnamefont {A.~V.}\ \bibnamefont {Thapliyal}},\ }\bibfield
  {title} {\bibinfo {title} {Exact and asymptotic measures of multipartite
  pure-state entanglement},\ }\href
  {https://doi.org/10.1103/PhysRevA.63.012307} {\bibfield  {journal} {\bibinfo
  {journal} {Phys. Rev. A}\ }\textbf {\bibinfo {volume} {63}},\ \bibinfo
  {pages} {012307} (\bibinfo {year} {2000})},\ \Eprint
  {https://arxiv.org/abs/quant-ph/9908073} {arXiv:quant-ph/9908073}
  \BibitemShut {NoStop}%
\bibitem [{\citenamefont {Eisert}\ and\ \citenamefont
  {Wilkens}(2000)}]{eisert2000catalysis}%
  \BibitemOpen
  \bibfield  {author} {\bibinfo {author} {\bibfnamefont {J.}~\bibnamefont
  {Eisert}}\ and\ \bibinfo {author} {\bibfnamefont {M.}~\bibnamefont
  {Wilkens}},\ }\bibfield  {title} {\bibinfo {title} {Catalysis of entanglement
  manipulation for mixed states},\ }\href
  {https://doi.org/10.1103/PhysRevLett.85.437} {\bibfield  {journal} {\bibinfo
  {journal} {Phys. Rev. Lett.}\ }\textbf {\bibinfo {volume} {85}},\ \bibinfo
  {pages} {437} (\bibinfo {year} {2000})},\ \Eprint
  {https://arxiv.org/abs/quant-ph/9912080} {arXiv:quant-ph/9912080}
  \BibitemShut {NoStop}%
\bibitem [{\citenamefont {Duan}\ \emph {et~al.}(2005)\citenamefont {Duan},
  \citenamefont {Feng}, \citenamefont {Li},\ and\ \citenamefont
  {Ying}}]{duan2005multiple}%
  \BibitemOpen
  \bibfield  {author} {\bibinfo {author} {\bibfnamefont {R.}~\bibnamefont
  {Duan}}, \bibinfo {author} {\bibfnamefont {Y.}~\bibnamefont {Feng}}, \bibinfo
  {author} {\bibfnamefont {X.}~\bibnamefont {Li}},\ and\ \bibinfo {author}
  {\bibfnamefont {M.}~\bibnamefont {Ying}},\ }\bibfield  {title} {\bibinfo
  {title} {Multiple-copy entanglement transformation and entanglement
  catalysis},\ }\href {https://doi.org/10.1103/PhysRevA.71.042319} {\bibfield
  {journal} {\bibinfo  {journal} {Phys. Rev. A}\ }\textbf {\bibinfo {volume}
  {71}},\ \bibinfo {pages} {042319} (\bibinfo {year} {2005})},\ \Eprint
  {https://arxiv.org/abs/quant-ph/0404148} {arXiv:quant-ph/0404148}
  \BibitemShut {NoStop}%
\bibitem [{\citenamefont {Kondra}\ \emph {et~al.}(2021)\citenamefont {Kondra},
  \citenamefont {Datta},\ and\ \citenamefont
  {Streltsov}}]{kondra2021catalytic}%
  \BibitemOpen
  \bibfield  {author} {\bibinfo {author} {\bibfnamefont {T.~V.}\ \bibnamefont
  {Kondra}}, \bibinfo {author} {\bibfnamefont {C.}~\bibnamefont {Datta}},\ and\
  \bibinfo {author} {\bibfnamefont {A.}~\bibnamefont {Streltsov}},\ }\bibfield
  {title} {\bibinfo {title} {Catalytic transformations of pure entangled
  states},\ }\href {https://doi.org/10.1103/PhysRevLett.127.150503} {\bibfield
  {journal} {\bibinfo  {journal} {Phys. Rev. Lett.}\ }\textbf {\bibinfo
  {volume} {127}},\ \bibinfo {pages} {150503} (\bibinfo {year} {2021})},\
  \Eprint {https://arxiv.org/abs/2102.11136} {arXiv:2102.11136 [quant-ph]}
  \BibitemShut {NoStop}%
\bibitem [{\citenamefont {Lipka-Bartosik}\ \emph {et~al.}(2024)\citenamefont
  {Lipka-Bartosik}, \citenamefont {Wilming},\ and\ \citenamefont
  {Ng}}]{review1}%
  \BibitemOpen
  \bibfield  {author} {\bibinfo {author} {\bibfnamefont {P.}~\bibnamefont
  {Lipka-Bartosik}}, \bibinfo {author} {\bibfnamefont {H.}~\bibnamefont
  {Wilming}},\ and\ \bibinfo {author} {\bibfnamefont {N.~H.~Y.}\ \bibnamefont
  {Ng}},\ }\bibfield  {title} {\bibinfo {title} {Catalysis in quantum
  information theory},\ }\href {https://doi.org/10.1103/RevModPhys.96.025005}
  {\bibfield  {journal} {\bibinfo  {journal} {Rev. Mod. Phys.}\ }\textbf
  {\bibinfo {volume} {96}},\ \bibinfo {pages} {025005} (\bibinfo {year}
  {2024})},\ \Eprint {https://arxiv.org/abs/2306.00798} {arXiv:2306.00798
  [quant-ph]} \BibitemShut {NoStop}%
\bibitem [{\citenamefont {Datta}\ \emph {et~al.}(2023)\citenamefont {Datta},
  \citenamefont {Varun~Kondra}, \citenamefont {Miller},\ and\ \citenamefont
  {Streltsov}}]{review2}%
  \BibitemOpen
  \bibfield  {author} {\bibinfo {author} {\bibfnamefont {C.}~\bibnamefont
  {Datta}}, \bibinfo {author} {\bibfnamefont {T.}~\bibnamefont {Varun~Kondra}},
  \bibinfo {author} {\bibfnamefont {M.}~\bibnamefont {Miller}},\ and\ \bibinfo
  {author} {\bibfnamefont {A.}~\bibnamefont {Streltsov}},\ }\bibfield  {title}
  {\bibinfo {title} {Catalysis of entanglement and other quantum resources},\
  }\href {https://doi.org/10.1088/1361-6633/acfbec} {\bibfield  {journal}
  {\bibinfo  {journal} {Reports on Progress in Physics}\ }\textbf {\bibinfo
  {volume} {86}},\ \bibinfo {pages} {116002} (\bibinfo {year} {2023})},\
  \Eprint {https://arxiv.org/abs/2207.05694} {arXiv:2207.05694 [quant-ph]}
  \BibitemShut {NoStop}%
\bibitem [{\citenamefont {Bavaresco}\ \emph {et~al.}(2025)\citenamefont
  {Bavaresco}, \citenamefont {Brunner}, \citenamefont {Girardin}, \citenamefont
  {Lipka-Bartosik},\ and\ \citenamefont {Sekatski}}]{bavaresco2025catalytic}%
  \BibitemOpen
  \bibfield  {author} {\bibinfo {author} {\bibfnamefont {J.}~\bibnamefont
  {Bavaresco}}, \bibinfo {author} {\bibfnamefont {N.}~\bibnamefont {Brunner}},
  \bibinfo {author} {\bibfnamefont {A.}~\bibnamefont {Girardin}}, \bibinfo
  {author} {\bibfnamefont {P.}~\bibnamefont {Lipka-Bartosik}},\ and\ \bibinfo
  {author} {\bibfnamefont {P.}~\bibnamefont {Sekatski}},\ }\bibfield  {title}
  {\bibinfo {title} {{Catalytic Activation of Bell Nonlocality}},\ }\Eprint
  {https://arxiv.org/abs/2504.02042} {arXiv:2504.02042 [quant-ph]}  (\bibinfo
  {year} {2025})\BibitemShut {NoStop}%
\bibitem [{\citenamefont {Wilming}\ and\ \citenamefont
  {Gallego}(2017)}]{Wilming2017}%
  \BibitemOpen
  \bibfield  {author} {\bibinfo {author} {\bibfnamefont {H.}~\bibnamefont
  {Wilming}}\ and\ \bibinfo {author} {\bibfnamefont {R.}~\bibnamefont
  {Gallego}},\ }\bibfield  {title} {\bibinfo {title} {{Third Law of
  Thermodynamics as a Single Inequality}},\ }\href
  {https://doi.org/10.1103/PhysRevX.7.041033} {\bibfield  {journal} {\bibinfo
  {journal} {Phys. Rev. X}\ }\textbf {\bibinfo {volume} {7}},\ \bibinfo {pages}
  {041033} (\bibinfo {year} {2017})},\ \Eprint
  {https://arxiv.org/abs/1701.07478} {arXiv:1701.07478 [quant-ph]} \BibitemShut
  {NoStop}%
\bibitem [{\citenamefont {Henao}\ and\ \citenamefont
  {Uzdin}(2023)}]{Henao2023}%
  \BibitemOpen
  \bibfield  {author} {\bibinfo {author} {\bibfnamefont {I.}~\bibnamefont
  {Henao}}\ and\ \bibinfo {author} {\bibfnamefont {R.}~\bibnamefont {Uzdin}},\
  }\bibfield  {title} {\bibinfo {title} {{Catalytic Leverage of Correlations
  and Mitigation of Dissipation in Information Erasure}},\ }\href
  {https://doi.org/10.1103/PhysRevLett.130.020403} {\bibfield  {journal}
  {\bibinfo  {journal} {Phys. Rev. Lett.}\ }\textbf {\bibinfo {volume} {130}},\
  \bibinfo {pages} {020403} (\bibinfo {year} {2023})},\ \Eprint
  {https://arxiv.org/abs/2202.07192} {arXiv:2202.07192 [quant-ph]} \BibitemShut
  {NoStop}%
\bibitem [{\citenamefont {Shiraishi}\ and\ \citenamefont
  {Sagawa}(2021)}]{Shiraishi2021}%
  \BibitemOpen
  \bibfield  {author} {\bibinfo {author} {\bibfnamefont {N.}~\bibnamefont
  {Shiraishi}}\ and\ \bibinfo {author} {\bibfnamefont {T.}~\bibnamefont
  {Sagawa}},\ }\bibfield  {title} {\bibinfo {title} {Quantum thermodynamics of
  correlated-catalytic state conversion at small scale},\ }\href
  {https://doi.org/10.1103/PhysRevLett.126.150502} {\bibfield  {journal}
  {\bibinfo  {journal} {Phys. Rev. Lett.}\ }\textbf {\bibinfo {volume} {126}},\
  \bibinfo {pages} {150502} (\bibinfo {year} {2021})},\ \Eprint
  {https://arxiv.org/abs/2010.11036} {arXiv:2010.11036 [quant-ph]} \BibitemShut
  {NoStop}%
\bibitem [{\citenamefont {Brandão}\ \emph {et~al.}(2015)\citenamefont
  {Brandão}, \citenamefont {Horodecki}, \citenamefont {Ng}, \citenamefont
  {Oppenheim},\ and\ \citenamefont {Wehner}}]{Brandao2015second}%
  \BibitemOpen
  \bibfield  {author} {\bibinfo {author} {\bibfnamefont {F.}~\bibnamefont
  {Brandão}}, \bibinfo {author} {\bibfnamefont {M.}~\bibnamefont {Horodecki}},
  \bibinfo {author} {\bibfnamefont {N.}~\bibnamefont {Ng}}, \bibinfo {author}
  {\bibfnamefont {J.}~\bibnamefont {Oppenheim}},\ and\ \bibinfo {author}
  {\bibfnamefont {S.}~\bibnamefont {Wehner}},\ }\bibfield  {title} {\bibinfo
  {title} {The second laws of quantum thermodynamics},\ }\href
  {https://doi.org/10.1073/pnas.1411728112} {\bibfield  {journal} {\bibinfo
  {journal} {Proceedings of the National Academy of Sciences}\ }\textbf
  {\bibinfo {volume} {112}},\ \bibinfo {pages} {3275} (\bibinfo {year}
  {2015})},\ \Eprint {https://arxiv.org/abs/1305.5278} {arXiv:1305.5278
  [quant-ph]} \BibitemShut {NoStop}%
\bibitem [{\citenamefont {Lipka-Bartosik}\ and\ \citenamefont
  {Skrzypczyk}(2021)}]{PhysRevLett.127.080502}%
  \BibitemOpen
  \bibfield  {author} {\bibinfo {author} {\bibfnamefont {P.}~\bibnamefont
  {Lipka-Bartosik}}\ and\ \bibinfo {author} {\bibfnamefont {P.}~\bibnamefont
  {Skrzypczyk}},\ }\bibfield  {title} {\bibinfo {title} {Catalytic quantum
  teleportation},\ }\href {https://doi.org/10.1103/PhysRevLett.127.080502}
  {\bibfield  {journal} {\bibinfo  {journal} {Phys. Rev. Lett.}\ }\textbf
  {\bibinfo {volume} {127}},\ \bibinfo {pages} {080502} (\bibinfo {year}
  {2021})},\ \Eprint {https://arxiv.org/abs/2102.11846} {arXiv:2102.11846
  [quant-ph]} \BibitemShut {NoStop}%
\bibitem [{\citenamefont {Terhal}\ and\ \citenamefont
  {Horodecki}(2000)}]{terhal2000}%
  \BibitemOpen
  \bibfield  {author} {\bibinfo {author} {\bibfnamefont {B.~M.}\ \bibnamefont
  {Terhal}}\ and\ \bibinfo {author} {\bibfnamefont {P.}~\bibnamefont
  {Horodecki}},\ }\bibfield  {title} {\bibinfo {title} {Schmidt number for
  density matrices},\ }\href {https://doi.org/10.1103/PhysRevA.61.040301}
  {\bibfield  {journal} {\bibinfo  {journal} {Phys. Rev. A}\ }\textbf {\bibinfo
  {volume} {61}},\ \bibinfo {pages} {040301} (\bibinfo {year} {2000})},\
  \Eprint {https://arxiv.org/abs/quant-ph/9911117} {arXiv:quant-ph/9911117}
  \BibitemShut {NoStop}%
\bibitem [{\citenamefont {Sanpera}\ \emph {et~al.}(2001)\citenamefont
  {Sanpera}, \citenamefont {Bru\ss{}},\ and\ \citenamefont
  {Lewenstein}}]{sanpera2001}%
  \BibitemOpen
  \bibfield  {author} {\bibinfo {author} {\bibfnamefont {A.}~\bibnamefont
  {Sanpera}}, \bibinfo {author} {\bibfnamefont {D.}~\bibnamefont {Bru\ss{}}},\
  and\ \bibinfo {author} {\bibfnamefont {M.}~\bibnamefont {Lewenstein}},\
  }\bibfield  {title} {\bibinfo {title} {Schmidt-number witnesses and bound
  entanglement},\ }\href {https://doi.org/10.1103/PhysRevA.63.050301}
  {\bibfield  {journal} {\bibinfo  {journal} {Phys. Rev. A}\ }\textbf {\bibinfo
  {volume} {63}},\ \bibinfo {pages} {050301} (\bibinfo {year} {2001})},\
  \Eprint {https://arxiv.org/abs/quant-ph/0009109} {arXiv:quant-ph/0009109}
  \BibitemShut {NoStop}%
\bibitem [{\citenamefont {Grabowecky}\ and\ \citenamefont
  {Gour}(2019)}]{grabowecky2019bounds}%
  \BibitemOpen
  \bibfield  {author} {\bibinfo {author} {\bibfnamefont {M.}~\bibnamefont
  {Grabowecky}}\ and\ \bibinfo {author} {\bibfnamefont {G.}~\bibnamefont
  {Gour}},\ }\bibfield  {title} {\bibinfo {title} {Bounds on entanglement
  catalysts},\ }\href {https://doi.org/10.1103/PhysRevA.99.052348} {\bibfield
  {journal} {\bibinfo  {journal} {Phys. Rev. A}\ }\textbf {\bibinfo {volume}
  {99}},\ \bibinfo {pages} {052348} (\bibinfo {year} {2019})},\ \Eprint
  {https://arxiv.org/abs/1811.01481} {arXiv:1811.01481 [quant-ph]} \BibitemShut
  {NoStop}%
\bibitem [{\citenamefont {Rubboli}\ and\ \citenamefont
  {Tomamichel}(2022)}]{rubboli2022fundamental}%
  \BibitemOpen
  \bibfield  {author} {\bibinfo {author} {\bibfnamefont {R.}~\bibnamefont
  {Rubboli}}\ and\ \bibinfo {author} {\bibfnamefont {M.}~\bibnamefont
  {Tomamichel}},\ }\bibfield  {title} {\bibinfo {title} {Fundamental limits on
  correlated catalytic state transformations},\ }\href
  {https://doi.org/10.1103/PhysRevLett.129.120506} {\bibfield  {journal}
  {\bibinfo  {journal} {Phys. Rev. Lett.}\ }\textbf {\bibinfo {volume} {129}},\
  \bibinfo {pages} {120506} (\bibinfo {year} {2022})},\ \Eprint
  {https://arxiv.org/abs/2111.13356} {arXiv:2111.13356 [quant-ph]} \BibitemShut
  {NoStop}%
\bibitem [{\citenamefont {van Dam}\ and\ \citenamefont
  {Hayden}(2003)}]{vanDam2003}%
  \BibitemOpen
  \bibfield  {author} {\bibinfo {author} {\bibfnamefont {W.}~\bibnamefont {van
  Dam}}\ and\ \bibinfo {author} {\bibfnamefont {P.}~\bibnamefont {Hayden}},\
  }\bibfield  {title} {\bibinfo {title} {Universal entanglement transformations
  without communication},\ }\href {https://doi.org/10.1103/PhysRevA.67.060302}
  {\bibfield  {journal} {\bibinfo  {journal} {Phys. Rev. A}\ }\textbf {\bibinfo
  {volume} {67}},\ \bibinfo {pages} {060302} (\bibinfo {year} {2003})},\
  \Eprint {https://arxiv.org/abs/quant-ph/0201041} {arXiv:quant-ph/0201041}
  \BibitemShut {NoStop}%
\bibitem [{\citenamefont {Christandl}\ and\ \citenamefont
  {Winter}(2004)}]{christandl2004squashed}%
  \BibitemOpen
  \bibfield  {author} {\bibinfo {author} {\bibfnamefont {M.}~\bibnamefont
  {Christandl}}\ and\ \bibinfo {author} {\bibfnamefont {A.}~\bibnamefont
  {Winter}},\ }\bibfield  {title} {\bibinfo {title} {{``Squashed
  entanglement'': An additive entanglement measure}},\ }\href
  {https://doi.org/10.1063/1.1643788} {\bibfield  {journal} {\bibinfo
  {journal} {Journal of Mathematical Physics}\ }\textbf {\bibinfo {volume}
  {45}},\ \bibinfo {pages} {829} (\bibinfo {year} {2004})},\ \Eprint
  {https://arxiv.org/abs/quant-ph/0308088} {arXiv:quant-ph/0308088}
  \BibitemShut {NoStop}%
\end{thebibliography}%

\appendix
\onecolumngrid

\hspace*{2cm}

%\section*{APPENDIX}

\setcounter{proposition}{0}

\end{document}